\DeclareMathOperator{\sign}{sgn}
\begin{document}

\title{AdderSR: Towards Energy Efficient Image Super-Resolution}

\author{Dehua Song$^{1*}$, Yunhe Wang$^{1}$\thanks{Equal contribution}, Hanting Chen$^{1,2}$, Chang Xu$^{3}$, Chunjing Xu$^{1}$, Dacheng Tao$^{3}$ \\
\normalsize$^1$ Noah's Ark Lab, Huawei Technologies  \ \ \normalsize$^2$ Peking University \ \ \
\normalsize$^3$ The University of Sydney.\\
\small\texttt{\{dehua.song, yunhe.wang\}@huawei.com;\ \ \ dacheng.tao@sydney.edu.au;}
}

\maketitle

\begin{abstract}
This paper studies the single image super-resolution problem using adder neural networks (AdderNets). Compared with convolutional neural networks, AdderNets utilize additions to calculate the output features thus avoid massive energy consumptions of conventional multiplications. However, it is very hard to directly inherit the existing success of AdderNets on large-scale image classification to the image super-resolution task due to the different calculation paradigm. Specifically, the adder operation cannot easily learn the identity mapping, which is essential for image processing tasks. In addition, the functionality of high-pass filters cannot be ensured by AdderNets. To this end, we thoroughly analyze the relationship between an adder operation and the identity mapping and insert shortcuts to enhance the performance of SR models using adder networks. Then, we develop a learnable power activation for adjusting the feature distribution and refining details. Experiments conducted on several benchmark models and datasets demonstrate that, our image super-resolution models using AdderNets can achieve comparable performance and visual quality to that of their CNN baselines with an about 2.5$\times$ reduction on the energy consumption. The codes are available at: \rm https://github.com/huawei-noah/AdderNet.
\end{abstract}

\section{Introduction}

Single image super-resolution (SISR) is a typical computer vision task which aims at reconstructing a high-resolution (HR) image from a low-resolution (LR) image. SISR is a very popular image signal processing task in real-world applications such as smart phones and mobile cameras. Due to the hardware constrains of these portable devices, it is necessary to develop SISR models with low computation cost and high visual quality. 

Recently, deep convolutional neural network (CNN) has dramatically boosted the performance of SISR. The first super-resolution convolutional neural network (SRCNN)~\cite{dong2014learning} contains only three convolutional layers with about 57K parameters. Then, the capacity of DCNN was amplified with the increasing of depth and width (channel number), resulting in notable improvement of super-resolution. The parameters and computation cost of recent DCNN are increased accordingly. For example, the residual dense network (RDN)~\cite{zhang2018residual} contains 22M parameters and requires about 10,192G FLOPs (floating-number operations) for processing only one image. Compared with neural networks for visual recognition (\eg, ResNet-50~\cite{he2016deep}), models for SISR have much higher computational complexities due to the larger feature map sizes. These massive calculations will consume much energy and reduce the enduration time of mobile devices.

\begin{figure}[tp]
\setlength{\abovecaptionskip}{-0.2cm}
\begin{center}
\scalebox{1.0}{
\includegraphics[width=0.48\textwidth]{./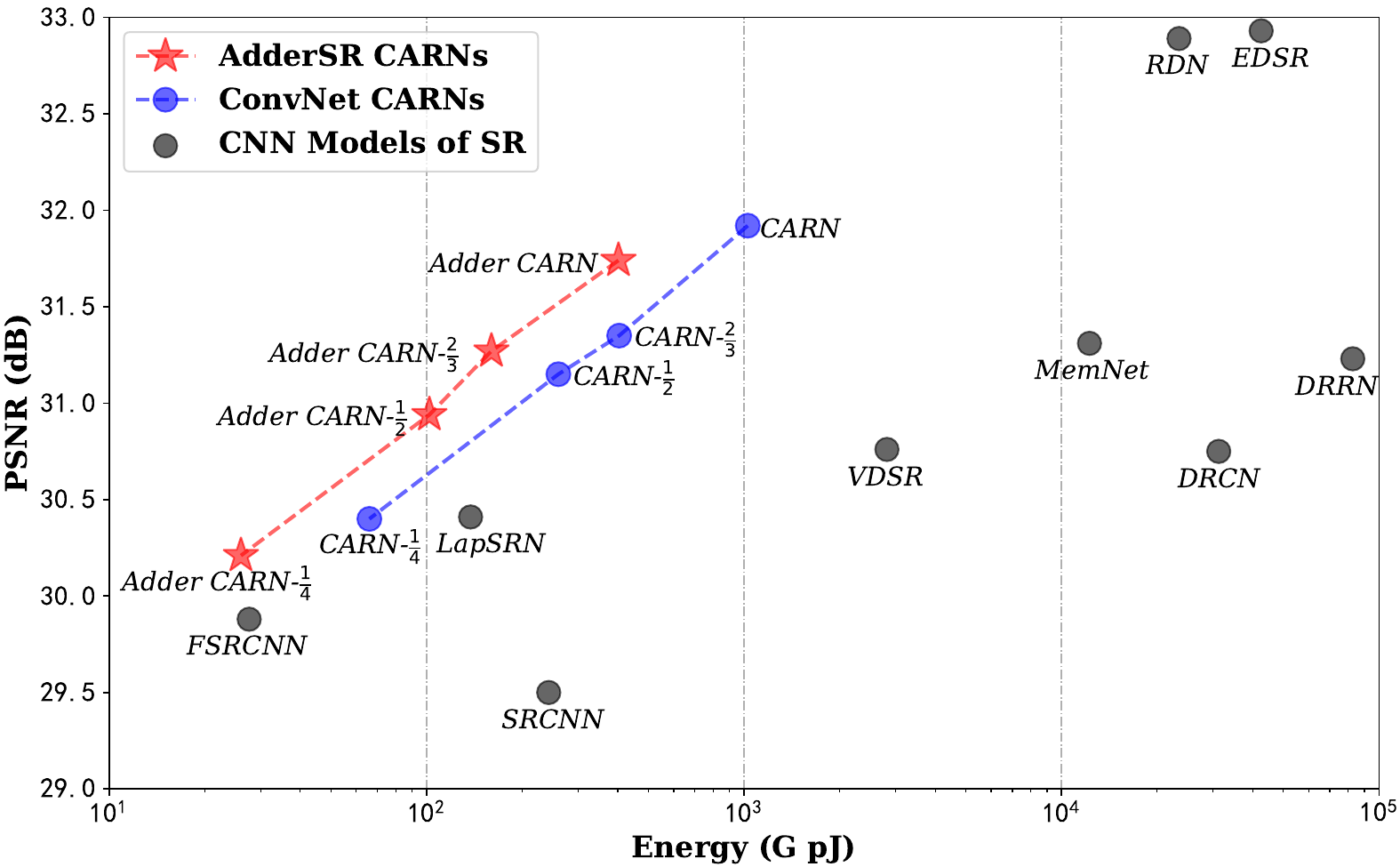}
}
\end{center}
\vspace{-0.3cm}
\caption{PSNR and energy trade-off with the SOTA SR methods on Urban100 database for $\times 2$ scale. AdderSR networks achieve superior performance with moderate energy.}
\label{FigEnergyCurve}
\vspace{-0.2cm}
\end{figure}

In order to address the aforementioned problem, a series of approaches have been proposed to compress and accelerate deep convolutional neural networks. The prominent compression methods such as filter pruning ~\cite{guo2016dynamic,wang2020gan,hou2020efficient,lin2020hrank} and knowledge distillation~\cite{hinton2015distilling,gao2018image,fu2020autogan,xu2020kernel} reduce the computation by narrowing or shallowing the network. On the other hand, quantization methods~\cite{courbariaux2015binaryconnect,ma2019efficient,Li2020PAMS,xin2020binarized} devote to reducing the computation complexity of multiplications while preserving the architecture of the original neural network. Wherein, binarization is a specific case that weights and activations in networks are represented as $\{+1, -1\}$, which can significantly reduce the energy and memory consumptions. However, the binarized network often cannot maintain the accuracy of full precision network, especially for super-resolution task~\cite{ma2019efficient}. Recently, Chen~\etal~\cite{chen2019addernet} proposed a novel AdderNet which replaces the multiplication operations by additions. Since the complexity of additions is much lower than that of multiplications, this work motivates us to utilize AdderNet for constructing energy efficient SISR models.

To maximally excavate the potential for exploiting AdderNets to establish SISR models, we first analyze the theoretical difficulties for applying the additions into SISR tasks. Specifically, input and output features in any two neighbor layers in SISR models are very close with the similar global texture and color information as shown in Figure~\ref{FigFeaturemaps}. However, the identify mapping cannot be learned by a one-layer adder network. Thus, we suggest to insert self-shortcuts and formulate new adder models for the SISR task. Moreover, we find that the high-pass filter is also hard to approximate by adder units. We then develop a learnable power activation. By exploiting these two techniques, we replace the conventional convolution filters in modern SISR networks by adder filters and establish AdderSR models accordingly. The effectiveness of the proposed SISR networks using additions is verified on several benchmark datasets. We can obtain comparable performance (\ie, PSNR values and visual quality) using AdderSR models with that of the CNN baselines. Meanwhile, we can reduce more than $50\%$ of the overall energy consumptions of these neural networks. Fig.~\ref{FigEnergyCurve} indicates the superior performance of AdderSR networks with moderate energy.

The rest of this paper is organized as follows. We briefly investigate the related works on neural network compression and energy-efficient approaches  in Section~\ref{RelatedwWorks}. In Section~\ref{AdderNetSR}, we present the motivation of using additions in SISR and establish AdderSR models. Section~\ref{experiment} illustrates both the quantitative and qualitative results on benchmarks and Section~\ref{conclusion} concludes the paper.

\section{Related Works}
\label{RelatedwWorks}

SISR is an important computer vision task, which has broad applications such as photo capture, surveillance and entertainment. In the last decades, numerous prominent approaches have been proposed to solve this ill-posed problem and obtained a tremendous progress in the performance. Unfortunately, current SISR methods require massive computations to process an input image. The energy consumption has become a thorny issue restricting the application of SISR on mobile devices.

\subsection{Model Compression}

Model compression has been investigated for many years and vast novel methods~\cite{han2020ghostnet,dong2016accelerating,kim2016deeply,tai2017image,lai2017deep,tai2017memnet} have been proposed. These methods can be roughly divided into four categories: network pruning, efficient filter design, neural architecture search (NAS) and knowledge distillation. Pruning~\cite{guo2016dynamic,wang2020gan,hou2020efficient,lin2020hrank} aims at reducing the redundancy of filters so as to decrease the computation of the original model. Hou~\etal~\cite{hou2020efficient} proposed a new pruning criterion for SISR which judged redundant channels with the discriminant information. The most conventional method is to design efficient block (\eg~GhostNet~\cite{han2020ghostnet}, CARN~\cite{ahn2018fast}, IDN~\cite{hui2018fast}, MAFFSRN~\cite{muqeet2020ultra}) with efficient operators (\eg group convolution, $1 \times 1$ convolution). Furthermore, NAS~\cite{song2019efficient, gong2019autogan, fu2020autogan} has been employed to exploit efficient SR neural architecture automatically. In addition, knowledge distillation~\cite{hinton2015distilling,gao2018image,fu2020autogan,xu2020kernel} can transfer the information from large models to improve the performance of the tiny model. Gao~\etal~\cite{gao2018image} proposed a novel knowledge distillation scheme for SISR and boosted the performance.

\subsection{Low-Cost Computation} 

In addition to reduce the computation of models, there are two kinds of approaches that decrease the  energy consumption~\cite{horowitz20141} while preserving the architecture of original network. Model quantization~\cite{courbariaux2015binaryconnect,ma2019efficient,Li2020PAMS,xin2020binarized} saves energy~\cite{horowitz20141} by reducing the number of bits required to represent each weight or feature component. Wherein, binarization is a specific case that weights and activations in networks are represented as $\{+1, -1\}$. Li \etal~\cite{Li2020PAMS} proposed a novel quantization scheme for SISR to acquire a large dynamic quantization range. Xin \etal~\cite{xin2020binarized} designed a bit-accumulation mechanism to alleviate the quantization error of binary SISR networks. Unfortunately, the quantized network often cannot maintain the accuracy of super-resolution networks. On the other hand, many research works~\cite{horowitz20141,sze2017efficient,you2020shiftaddnet} have indicated that add addition consumes fewer energy than multiplication. Recently, Chen et al.~\cite{chen2019addernet} pioneered a novel method to reduce the power dissipation of network by taking the place of the multiplication with add operation. It achieved marginal loss of accuracy on classification tasks without any multiplication in convolutional layers. Then, kernel based progressive distillation~\cite{xu2020kernel} promoted the accuracy of AdderNets even superior to that of standard CNNs. It is attractive to construct energy efficient SISR models with AdderNet.
\section{AdderNet for Image Super-Resolution}
\label{AdderNetSR}

AdderNet reduces the energy consumption of classification networks significantly while achieving comparable performance. We aim to inherit this huge success to the image super-resolution task, which often has higher energy consumption and computational complexity. In SISR, there are two important properties that should be ensured by adder neural networks: the similarity between the input and output features of each convolutional layer, and the enhancement of details \wrt high frequency information. These two properties are illustrated in Fig.~\ref{FigFeaturemaps}. 

\subsection{Preliminaries and Motivation}

\begin{figure*}[tp]
\setlength{\abovecaptionskip}{-0.2cm}
\begin{center}
\scalebox{1.0}{
\includegraphics[width=1.0\textwidth]{./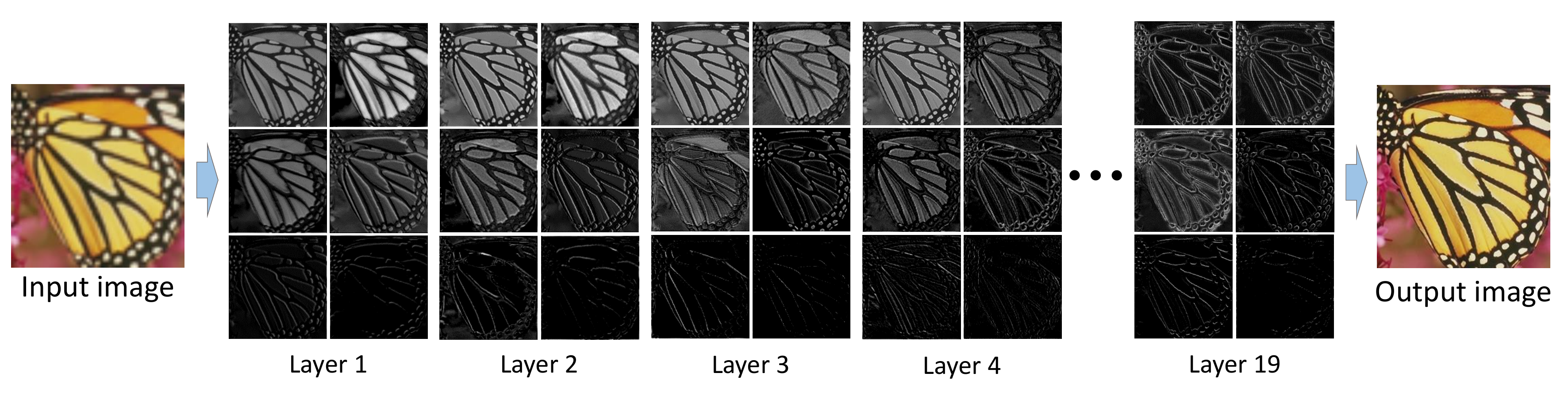}
}
\end{center}
\vspace{-0.55cm}
\caption{The output feature maps of different layers in VDSR~\cite{kim2016accurate} for the input image ``Butterfly''. The differences between any two adjacent layers are very similar with both texture and color information. The details \wrt the high-frequency information will be enhanced as the depth increases. These two important properties should be ensured by adder neural networks.}
\label{FigFeaturemaps}
\vspace{-0.3cm}
\end{figure*}

Here we firstly briefly introduce the single image super-resolution tasks using deep learning methods and then discuss the difficulty for directly using AdderNets to construct energy efficient SR models.

SISR aims at reconstructing a high-resolution image from a low-resolution image. It is a typical ill-posed reverse problem since an infinite number of high-resolution images could generate the same low-resolution image with downsampling. The objective function for the conventional SISR task can be formulated as: 

\vspace{-0.1cm}
\begin{equation}
\label{SrEquation}
\arg\min_{\mathbf{I_x}} \mathbf{I_y} = \mathbf{I_x}+\lambda\mathcal{R}(\mathbf{I_x})
\end{equation}
where $\mathbf{I_y}$ is the observation data, \ie, the low-resolution image, and $\mathbf{I_x}$ is the desired high-resolution image, $\mathcal{R}(\cdot)$ denotes the used priori such as smooth and additive noise, $\lambda$ is the tradeoff parameter.

Recently, Dong \etal~\cite{dong2014learning} first introduced deep learning method to super-resolution and achieved much better performance than traditional methods. Along with the improvement of super-resolution performance, the parameters and computations of SR networks grew rapidly, which seriously limited the efficiency of model executing on mobile devices. Hence, another research direction is to deploy efficient SR networks. Quantization, knowledge distillation, efficient operator designing and NAS have been explored to exploit efficient and accurate SR model. However, energy consumptions required by these portable SISR models are still much expensive for real-world mobile devices.

Distinguish to the existing model compression techniques such as pruning and quantization, Chen \etal~\cite{chen2019addernet} presented a  fundamentally different way to reduce the energy consumption by replacing multiplication by cheap addition. AdderNet with the same architecture as that of ResNet-50 obtains a 91.7\% top-5 accuracy while eliminating massive multiplications.  For the parameters \wrt the filters $\mathbf{W} \in \mathbb{R}^{k\times k\times c_{in}\times c_{out}}$ in an arbitrary layer in AdderNets, where $k$ is the size of kernel, $c_{in}$ and $c_{out}$ are the number of input channels and output channels, respectively. Denote the input feature as $\mathbf{X}\in \mathbb{R}^{h\times w \times c_{in}}$, where $w$ and $h$ are the width and height of the input feature, respectively. The original adder operation is:

\vspace{-0.2cm}
\begin{equation}
\label{AdderEquation}
\small
\mathbf{Y} (m,n,q) = -\sum_{i=1}^{k}\sum_{j=1}^{k}\sum_{g=1}^{c_{in}} \vert \mathbf{X} (m+i,n+j,g) - \mathbf{W} (i,j,g,q)\vert,
\end{equation}
where $|\cdot|$ is the absolute value function. $m$ and $n$ denote the spatial location of features. $q$ is the index of output channels. Although Eq.~\ref{AdderEquation} shows comparable performance on image classification tasks, the SISR problem defined in Eq.~\ref{SrEquation} is quite different to the conventional recognition task. For example, we need to ensure that the output results maintain the original texture in $\mathbf{X}(\ie,\mathbf{I_y})$, which cannot be easily learned by Eq.~\ref{AdderEquation}. Therefore, we should design a new framework of AdderNet for energy-efficient SISR models.
 
\subsection{Learning Identity Mapping using AdderNet}
\label{identitySection}

Usually, an arbitrary super-resolution model using neural network learns the mapping from input LR image to HR image in an end-to-end manner. In addition to enhancing the high-frequency details, the overall texture and color information should be also maintained. Figure~\ref{FigFeaturemaps} illustrates the feature maps of different convolutional layers for a given LR image using VDSR~\cite{kim2016accurate}. It can be seen that the difference between the input feature map and output feature map of each convolutional layer are very similar. This observation reveals that the identity mapping (\ie, $\mathbf{I_y} = \mathcal{F}(\mathbf{I_y})$) in SISR task using deep learning method is very essential. 

For the traditional convolutional network, the identity mapping is very easy to learn when the weight is an identity matrix. However, adder neural network defined in Eq.~\ref{AdderEquation} employs  $\ell_1$-norm to measure the distance between filter and input data. Although the  $\ell_1$ distance can perform well on the image classification task, the adder filter cannot approximate the identity mapping as described in Theorem~\ref{adder_theorem}. Stacking more adder layers may alleviate this problem, but it will increase the model size and complexity significantly.

\vspace{0.01cm}
\newtheorem{theorem}{Theorem}
\begin{theorem}
\label{adder_theorem}
	For the arbitrary low-resolution image $\mathbf I_y$, and an adder filter with its weight parameters  $\mathbf{W}$. There exists no $\mathbf{W}$ satisfying the following equation:
	\begin{equation}
	\mathbf{I_y} = \mathbf{I_y} \oplus \mathbf{W},
	\end{equation}
     where $\oplus$ denotes the adder operation defined in Eq.~\ref{AdderEquation}.
\end{theorem}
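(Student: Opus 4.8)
The plan is to exploit the sign structure of the adder operation in Eq.~\ref{AdderEquation}. The central observation is that the right-hand side of Eq.~\ref{AdderEquation} is the \emph{negation} of a sum of absolute values, so every output entry satisfies $(\mathbf{I_y} \oplus \mathbf{W})(m,n,q) \le 0$ no matter how $\mathbf{W}$ is chosen. Indeed, each summand $|\mathbf{I_y}(m+i,n+j,g) - \mathbf{W}(i,j,g,q)|$ is nonnegative, hence the whole double sum is nonnegative and its negation is nonpositive. First I would make this precise and record it as an entrywise inequality valid for \emph{all} admissible $\mathbf{W}$, since it is the only property of the adder we will actually need.

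Next I would confront this constraint with the nature of the target. A low-resolution image $\mathbf{I_y}$ stores pixel intensities, so it is nonnegative and, being a genuine (non-blank) image, possesses at least one strictly positive entry; fix a location $(m_0,n_0,q_0)$ with $\mathbf{I_y}(m_0,n_0,q_0) > 0$. Comparing the two sides of the claimed identity at this coordinate yields $\mathbf{I_y}(m_0,n_0,q_0) > 0 \ge (\mathbf{I_y} \oplus \mathbf{W})(m_0,n_0,q_0)$, a strict inequality that no choice of $\mathbf{W}$ can repair. This single mismatched coordinate already contradicts $\mathbf{I_y} = \mathbf{I_y} \oplus \mathbf{W}$, so no weight tensor $\mathbf{W}$ realizes the identity mapping.

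I expect the only genuine subtleties to be bookkeeping rather than conceptual. For the equation to be well posed one needs the shapes to agree, so I would first note $c_{in} = c_{out}$ and assume the usual zero-padding so that the spatial indices of $\mathbf{Y}$ match those of $\mathbf{I_y}$; these are implicit in the statement. The one honest caveat is the degenerate case of an everywhere-nonpositive (e.g.\ all-zero) image, for which $\mathbf{W} = \mathbf{0}$ trivially reproduces the input; I would therefore stress that the argument relies on the existence of a positive pixel, which holds for any real low-resolution image. The main obstacle, such as it is, lies precisely in spotting that this sign constraint alone suffices: once the nonpositivity of the adder output is isolated, the impossibility follows immediately, without any delicate analysis of the $\ell_1$ geometry of the filter.
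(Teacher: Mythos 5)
Your proof is correct, but it takes a genuinely different route from the paper's. You isolate the sign constraint: the adder output $-\sum|\cdot|$ is entrywise nonpositive for \emph{every} $\mathbf{W}$, so it cannot reproduce any image with a strictly positive pixel. The paper instead assumes a single $\mathbf{W}$ realizes the identity for \emph{all} inputs, picks an image $\mathbf{I^1_y}$ whose entries exceed $\max_{i,j,k,g}|\mathbf{W}(i,j,k,g)|$ so the absolute values unfold linearly, then shifts the image by $1$ and derives the numerical contradiction $d^2c+1=0$. The trade-offs are real: your argument is shorter and proves a stronger per-image statement (no $\mathbf{W}$ works even for one fixed image), but it hinges on nonnegativity of the input --- it says nothing about inputs that are entirely nonpositive, which is not a purely degenerate concern since the identity-mapping issue the paper cares about also arises at intermediate feature maps (after batch normalization, say) whose values can be negative. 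The paper's argument is indifferent to the sign of the data and exposes a structural obstruction --- the adder responds to a constant shift of its input by a fixed offset $-d^2c$ rather than by the same shift --- at the cost of only ruling out a $\mathbf{W}$ that works uniformly over a family of inputs (it needs at least the two images $\mathbf{I^1_y}$ and $\mathbf{I^1_y}+1$). Your caveats about shape agreement and the all-zero image are the right ones to flag; with the positive-pixel hypothesis made explicit, your proof stands on its own.
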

\begin{proof}
Here we consider a fully-connected layer for simplicity. Assuming that there exists a filter $\mathbf{W}$ that satisfies $\mathbf{I_y} = \mathbf{I_y} \oplus \mathbf{W}$ for any $\mathbf{I_y}\in \mathbb{R}^{d\times d\times c}$, where $\mathbf{W}\in\mathbb{R}^{d\times d \times c}$. Then for each element in $\mathbf{I_y}$, we have:
\begin{equation}
\small
\mathbf{I_y} (m,n,q) = -\sum_{i=1}^{d}\sum_{j=1}^{d}\sum_{g=1}^{c} \vert \mathbf{I_y} (m+i,n+j,g) - \mathbf{W} (i,j,g,q)\vert,
\end{equation}
where $m\in\{1,\dots,d\},n\in\{1,\dots,d\},q\in\{1,\dots,c\}$.

Then, we can select proper values for each element in $\mathbf{I^1_y}$ such that $\mathbf{I^1_y} (m,n,q) > \max_{i,j,k,g} {\vert \mathbf{W} (i,j,k,g)\vert}$, we have:
\begin{equation}
\small
\mathbf{I^1_y} (m,n,q) = -\sum_{i=1}^{d}\sum_{j=1}^{d}\sum_{g=1}^{c} [\mathbf{I^1_y} (m+i,n+j,g) - \mathbf{W} (i,j,g,q)].
\label{proof:1}
\end{equation}

Let $\mathbf{I^2_y} = \mathbf{I^1_y}+1 $, then we have:
\begin{equation}
\begin{aligned}
\small
\mathbf{I^1_y} (m,n,q) +1 = & -\sum_{i=1}^{d}\sum_{j=1}^{d}\sum_{g=1}^{c} [\mathbf{I^1_y} (m+i,n+j,g) + 1 \\& - \mathbf{W} (i,j,g,q)]\\
= & -\sum_{i=1}^{d}\sum_{j=1}^{d}\sum_{g=1}^{c} [\mathbf{I^1_y} (m+i,n+j,g) \\& -  \mathbf{W} (i,j,g,q)] - d^2 \times c. 
\end{aligned}
\label{proof:2}
\end{equation}

Combining Eq.~\ref{proof:1} and~\ref{proof:2}, we have $d^2 \times c + 1 = 0 $, which is obviously impossible. In addition, the above proof can be easily extended to convolution layers in which the filter size of $\mathbf{W}$ is often much smaller than that of the input data $\mathbf{I_y}$.

\end{proof}

According to Theorem~\ref{adder_theorem} and above analysis, the identity mapping cannot be directly learned using a one-layer adder neural network. To address this problem, we propose to refine the existing adder unit for adjusting the super-resolution task. In practice, we present a self-shortcut operation for each adder layer, \ie,

\begin{equation}
\label{residualmapping}
\mathbf{Y}^l = \mathbf{X}^l + \mathbf{W}^l \oplus \mathbf{X}^l.
\end{equation}
where $\mathbf{W}^l$ is the weights of adder filters in the $l$-th layer, $\mathbf{X}^l$ and  $\mathbf{Y}^l$ are the input data and output data, respectively. Since the output of Eq.~\ref{residualmapping} contains the input data $\mathbf{X}^l$ itself, we can utilize it to approximate the identity mapping by reducing the magnitude of $\mathbf{W}^l\oplus \mathbf{X}^l$.

Different to the image recognition tasks, features in most of SISR problems should maintain a fixed size, \ie, the width and height of $\mathbf{X}^l$ and $\mathbf{Y}^l$ are exactly the same. Thus, the new calculation as described in Eq.~\ref{residualmapping} can be embedded into most of conventional SISR models. In the following experiments, we will utilize Eq.~\ref{residualmapping} to replace most of convolutional layers whose output size is the same as that of the input size.

\subsection{Learnable Power Activation}
\label{PowerActivation}

Besides the identity mapping, there is another important functionality of traditional convolution filters that cannot be easily ensured by adder filters. The goal of a SISR model is to enhance the details including color and texture information of the input low-resolution images. Therefore, the high-pass filter is also a very important component in most of existing SISR models. It can be found in Figure~\ref{FigFeaturemaps}, the details of the input image are gradually enhanced when the network depth is increased.

Generally, natural images are composed of different-frequency information. For example, the background and a large area of grass are low-frequency information, where most of neighbor pixels are very closed. In contrast, edges of objects and some buildings are exactly high-frequency information for the given entire image. In practice, if we define an arbitrary image as the combination of high-frequency part and the low-frequency part as $\mathbf{I} =\mathbf{I}_H+\mathbf{I}_L$, an ideal high-pass filter $\Phi(\cdot)$ used in the super-resolution task and other image processing problems could be defined as 

\begin{equation}
\label{HighFrequencyEquation}
\small
\Phi(\mathbf{I}) = \Phi(\mathbf{I}_H+\mathbf{I}_L) = \sigma(\mathbf{I}_H),
\end{equation}
which only preserves the high-frequency part of the input image. Wherein, $\sigma(\mathbf{I}_H)$ is the convolution response of the high-frequency part. The above equation can help the SISR model removing redundant outputs and noise and enhancing the high-frequency details, which is also a very essential component in the SISR models.

Similarly, for the traditional convolution operation, the functionality of Eq.~\ref{HighFrequencyEquation} can be directly implemented. For example, a $2 \times 2$ high-pass filter $\small \begin{bmatrix} -1 & +1 \\ +1 & -1 \end{bmatrix}$ can be used for removing any flat areas in $\mathbf{I}$. However, for the adder neural network as defined in Eq.~\ref{AdderEquation}, it is impossible to achieve the functionality as described in Eq.~\ref{HighFrequencyEquation}.

\begin{theorem}
\label{filter_theorem}
	Let $\mathbf{E}\in\mathbb{R}^{d \times d}$ be an input image for the given SR model using adder operation, where each element in $\mathbf{E}$ is equal to $1$. $\mathbf{W}$ denotes the weights of an arbitrary adder filter. There exists no $\mathbf{W}\in\mathbb{R}^{d \times d}$ and a constant $a\in \mathbb{R}$ satisfying the following equation:
	\begin{equation}
	(s* \mathbf{E}) \oplus \mathbf{W} = a,
	\end{equation}
	where $s\in\mathbb{R}$, and $\oplus$ denotes the adder operation which contains only additions as defined in ~\ref{AdderEquation}.
\end{theorem}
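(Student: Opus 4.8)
The plan is to reuse the strategy behind Theorem~\ref{adder_theorem}: assume toward a contradiction that some $\mathbf{W}$ and constant $a$ satisfy $(s*\mathbf{E})\oplus\mathbf{W}=a$ for every $s\in\mathbb{R}$, write out the adder response to the flat image $s*\mathbf{E}$ as an explicit function of the scale $s$, and then exhibit two choices of $s$ at which this response must differ, contradicting the requirement that it equal a single constant $a$.

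First I would unfold the adder operation on $s*\mathbf{E}$. Because every entry of $\mathbf{E}$ equals $1$, the filter window coincides with the whole input and the output is the scalar
\begin{equation}
(s*\mathbf{E})\oplus\mathbf{W} = -\sum_{i=1}^{d}\sum_{j=1}^{d}\vert s - \mathbf{W}(i,j)\vert =: f(s).
\end{equation}
The hypothesis forces $f(s)=a$ for all $s$, i.e.\ $f$ is constant on $\mathbb{R}$. To break this, I would pick $s_1 > \max_{i,j}\vert \mathbf{W}(i,j)\vert$, so that $s_1-\mathbf{W}(i,j)>0$ for every $i,j$ and all absolute values open with the same sign:
\begin{equation}
f(s_1) = -\sum_{i=1}^{d}\sum_{j=1}^{d}\bigl(s_1 - \mathbf{W}(i,j)\bigr) = -d^2 s_1 + \sum_{i,j}\mathbf{W}(i,j).
\end{equation}
Taking $s_2 = s_1+1$, which still exceeds every $\vert\mathbf{W}(i,j)\vert$, the same unfolding gives $f(s_2)=f(s_1)-d^2$. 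Since both values must equal $a$, this yields $a = a - d^2$, hence $d^2=0$, which is impossible for $d\ge 1$; the contradiction proves no such $\mathbf{W}$ and $a$ exist.

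The one step deserving care — and the crux of the argument — is justifying that the absolute values all open with a uniform sign, which is exactly the ``make the input large enough'' device already used in Theorem~\ref{adder_theorem}; it is what collapses the piecewise-linear map $f$ into a line of slope $-d^2$ and exposes its non-constancy. Conceptually the obstruction is transparent: a flat image carries no high-frequency component, so an ideal high-pass filter should return the same value for every brightness level $s$, yet the adder response grows linearly with $s$ for large $s$ and therefore can never be brightness-invariant. This is precisely why the functionality in Eq.~\ref{HighFrequencyEquation} cannot be realized by an adder unit.
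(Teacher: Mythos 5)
Your proposal is correct and follows essentially the same route as the paper's own proof: assume a valid $\mathbf{W}$ and $a$, take $s$ larger than $\max_{i,j}\vert\mathbf{W}(i,j)\vert$ so the absolute values open uniformly, and compare the responses at $s$ and $s+1$ to extract the contradiction $d^2=0$. Your explicit computation of the slope $-d^2$ and the high-pass interpretation are nice additions, but the argument is the same.
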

\begin{proof}
Assuming that there exists an adder filter $\mathbf{W}$ that satisfies $(s* \mathbf{E}) \oplus \mathbf{W} = a$, for any $s\in\mathbb{R}$. We can find a $s$ such that $s > max_{i,j} {\vert \mathbf{W} (i,j)\vert}$. Then we have:
\begin{equation}
\small
\begin{aligned}
-\sum_{i=1}^{d}\sum_{j=1}^{d} [ (s+1)\mathbf{E} (i,j) - \mathbf{W} (i,j)] = a = & \\ -\sum_{i=1}^{d}\sum_{j=1}^{d} [ s\mathbf{E} (i,j) - \mathbf{W} (i,j)] &,
\label{proof:3}
\end{aligned}
\end{equation}
which means $\sum_{i=1}^{d}\sum_{j=1}^{d}  (s+1)\mathbf{E} (i,j) = \sum_{i=1}^{d}\sum_{j=1}^{d} s\mathbf{E} (i,j)$ and leads to a contradiction.

\end{proof}

According to the above theorem, the functionality of high-pass filter cannot be replaced by adder filter. Thus, the super-resolution process in the adder neural networks will involve more redundancy. To this end, we need to develop a new scheme for using adder neural networks to conduct the SISR tasks that can compensate this defect.

Admittedly, we can also add some parameters and filters to improve the capacity of the SR model using adder units, the save on energy and calculation will be reduced. Fortunately, Sharabati and Xi~\cite{sharabati2016fast} applied the Box-Cox transformation~\cite{sakia1992box} in the image denoising task and found that this transformation can achieve the similar functionality to that of the high-pass filters without adding massive parameters and calculations. Oliveira~\etal~\cite{oliveira2018single} further discussed the functionality of Box-Cox transformation in image super-resolution. In addition, a sign-preserving power law point transformation is also explored for emphasizing the areas with abundant details in the input image~\cite{ramponi1996nonlinear}. Therefore, we propose a learnable power activation function to solve the defect of AdderNet and refine the output images, \ie, 
\begin{equation}
\label{powerfunction}
\mathcal{P}(\mathbf{Y}) = \sign (\mathbf{Y}) \cdot {|\mathbf{Y}|}^\alpha ,
\end{equation}
where $\mathbf{Y}$ is the output features, $\sign(\cdot)$ is the sign function, $\alpha > 0$ is a learnable parameter for adjusting the information and distribution. When $\alpha > 1$, the above activation function can enhance the contrast of output images and emphasize the high-frequency information. When $0 < \alpha < 1$, Eq.~\ref{powerfunction} can smooth all signals in the output image and remove artifacts and noise. In addition, the above function can be easily embedded into the conventional ReLU in any SISR models.

By exploiting these two methods described in Eq.~\ref{residualmapping}  and Eq.~\ref{powerfunction}, we can address the aforementioned problems on using adder networks for conducting the SISR task. Although there are some additional computations introduced, \eg, Eq.~\ref{residualmapping} needs a shortcut to maintain the information in the input data, and the learnable parameter $\alpha$ in Eq.~\ref{powerfunction} leads to some additional multiplications. However, compared with the massive operations required by either convolutional layer or adder layer, they are very trivial. For example, the number of adder operations defined in Eq.~\ref{AdderEquation} is about $2k^2c_{in}c_{out}hw$, and the additional computations required by  Eq.~\ref{residualmapping} and Eq.~\ref{powerfunction} are both equal to $c_{out}hw$. Considering that $2k^2c_{in}$ is usually a relatively large value in modern deep neural architectures (\eg, $k = 3$ and $c_{in} = 64$), the additional computation for each layer is over $1000\times$ lower than that of the original method. In the next section, we will conduct extensive experiments to illustrate the superiority of the proposed method on both the visual quality and energy consumptions.

\section{Experiments}
\label{experiment}

In the above section, we have developed a series of new operations for establishing SISR models using adder neural networks. Here we will conduct experiments to verify the effectiveness of the proposed AdderSR networks.

\vspace{-0.25cm}
\paragraph{Datasets.} 

To evaluate the performance of adder neural networks on super-resolution tasks, we select several benchmark image datasets to conduct the experiments. Following the setting of VDSR~\cite{kim2016accurate}, \emph{291} dataset is employed to train adder VDSR networks. It consists of 91 images from Yang \etal~\cite{yang2010image} and 200 images from Berkeley Segmentation Dataset~\cite{martin2001database}. In addition, DIV2K dataset~\cite{timofte2017ntire} is utilized to train adder EDSR networks in the following experiments. This dataset consists of 800 training images and 100 validation images. In order to compare with other state-of-the-art methods, four relatively small benchmarks are also selected including Set5, Set14, B100 and Urban100. The LR image is generated with bicubic downsampling. Super-resolution results are evaluated using both peak signal-to-noise ratio (PSNR~\cite{dong2014learning}) and structure similarity index (SSIM~\cite{wang2004image}) on Y channel ( \ie, luminance) of YCbCr space.

\vspace{-0.25cm}
\paragraph{Training setting.}

We evaluate the performance of the proposed AdderSR networks using two famous neural architectures for super-resolution, \ie,  VDSR~\cite{kim2016accurate} and EDSR~\cite{lim2017enhanced}, which are shown extraordinary performance for generating images with high visual quality. Following the setting of the conventional AdderNet, we do not replace the first and the last convolutional layers in these networks, and the batch normalization is employed on each adder layer. To accelerate convergence speed of AdderSR networks, the learning rate for adder layers in our models is enlarged 10 times than that of convolutional layers in baselines. Specifically, the learning rates of adder layer and convolutional layer are initialized as $3\times10^{-3}$ and $3\times 10^{-4}$, respectively. The optimizer utilized in adder neural network for super-resolution task is ADAM~\cite{kingma2014adam} due to its fast convergence speed. Hyper-parameters here are set as $\beta_{1}=0.9$, $\beta_{2}=0.999$ and $\epsilon=10^{-8}$. Other settings such as patch size and data augmentation strategies are totally the same as those in baseline VDSR and EDSR.

\begin{figure*}[tp]
\setlength{\abovecaptionskip}{-0.1cm}
\begin{center}
\scalebox{0.998}{
\includegraphics[width=1\textwidth]{./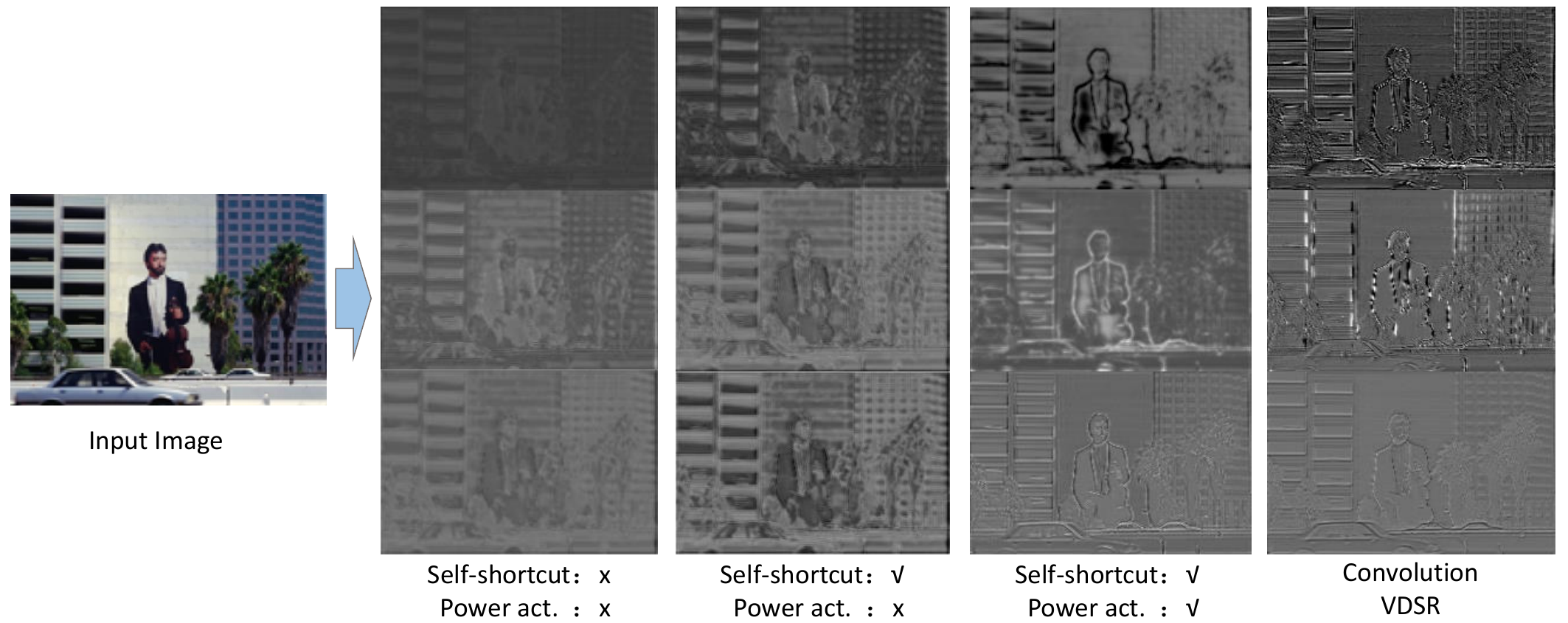}
}
\end{center}
\vspace{-0.45cm}
\caption{The output feature maps of adder layers in AdderSR networks with different strategies. Feature maps of adder layer without self-shortcut (Eq.~\ref{residualmapping}) cannot maintain the overall texture information of their input features. The power activation function (Eq.~\ref{powerfunction}) effectively enhances high-frequency regions.}
\label{IdentityFeature}
\vspace{-0.10cm}
\end{figure*}

\begin{table*}[t]
\small
\begin{center}
\caption{Ablation study of the proposed AdderSR network. Wherein, results of VDSR model on the four datasets are directly reported from the original paper ~\cite{kim2016accurate}.}
\label{ablationTable}
\renewcommand{\arraystretch}{1.3}
\setlength{\tabcolsep}{1.0mm}{
\begin{tabular}{p{2.4cm}<{\centering}|p{2.6cm}<{\centering}|p{2.6cm}<{\centering}|p{1.4cm}<{\centering}| p{1.4cm}<{\centering} | p{1.4cm}<{\centering} | p{1.4cm}<{\centering}}
\hline
\multirow{2}*{Model} & \multicolumn{2}{c|}{Architecture} & Set5 & Set14 & B100 & Urban100 \\
\cline{2-3}
~ &  Self-shortcut (Eq.~\ref{residualmapping}) & Power Act. (Eq.~\ref{powerfunction}) & PSNR &  PSNR & PSNR & PSNR \\
\hline
Adder VDSR 1 & $\times$ & $\times$ & 35.11 & 31.40 & 30.63 & 27.77\\
\hline
Adder VDSR 2 & $\times$ & \checkmark & 35.44 & 31.62 & 30.78 & 27.96\\
\hline
Adder VDSR 3 & \checkmark & $\times$ & 37.10 & 32.77 & 31.69 & 30.05\\
\hline
Adder VDSR 4 & \checkmark & \checkmark & 37.37 & 32.93 & 31.81 & 30.48\\
\hline
Conv. VDSR~\cite{kim2016accurate} & $-$ & $ - $ & 37.53 & 33.03 & 31.90 & 30.76\\
\hline
\end{tabular}}
\end{center}
\vspace{-6mm}
\end{table*}

\paragraph{Ablation study.}

To ensure the performance of adder neural networks on the SISR tasks, we have thoroughly designed two new operations in Eq.~\ref{residualmapping} and Eq.~\ref{powerfunction}. Here we will first conduct the detailed ablation study to illustrate their functionalities.
In practice, the VDSR using convolutional layers is selected as baseline, and we replace all intermediate layers by adder layers as described in Eq.~\ref{AdderEquation}. Table~\ref{ablationTable} shows results of ablation experiments. Without Eq.~\ref{residualmapping} and Eq.~\ref{powerfunction}, the PSNR of AdderSR is $2.08$ \emph{dB} lower than that of conventional VDSR network on the average of four benchmark datasets. Such a PSNR value decline will increase the artifacts in the resulting high-resolution images. In Eq.~\ref{AdderEquation}, the self-shortcut makes it possible to optimize an identity mapping. Thus, the performance of AdderSR network using Eq.~\ref{AdderEquation} can obtain an about 1.67 \emph{dB} PSNR enhancement on average. 
In addition, Eq.~\ref{powerfunction} is proposed to emphasize the high-frequency information in intermediated features and reconstructed images. By embedding Eq.~\ref{powerfunction} into AdderSR network, the PSNR value can be further improved with about 0.24 \emph{dB} on benchmark datasets. If both Eq.~\ref{residualmapping} and Eq.~\ref{powerfunction} are employed on AdderSR network, the performance can be improved with 1.91 \emph{dB} on the average of four datasets. The final result of AdderSR network is pretty close to that of conventional VDSR. Detailed visualization results of these models can be found in the supplementary materials.

\paragraph{Feature Visulizations.}

To have an explicit illustration on the functionalities of different components in our models using adder layers, we visualize the feature maps of Adder VDSR model 1, 3 and 4 reported in Table~\ref{ablationTable}. From Figure~\ref{IdentityFeature}, we can see that the feature maps of AdderSR network without Eq.~\ref{residualmapping} are quite different from the input image. The texture information such as human face, cars and trees in the input image are blured or distorted. In contrast, the features maps of AdderSR network using Eq.~\ref{residualmapping} preserve more details information than those of the model without Eq.~\ref{residualmapping}. Moreover, it is obvious that the high-frequency information (\eg, edge, corner) is emphasized while a portion of low-frequency information is eliminated with the help of Eq.~\ref{powerfunction}. That is a very important functionality for proceeding high-resolution images. In summary, by exploiting the proposed method, we can generate features with abundant texture and establish effective SISR models using only additions.

\begin{figure*}[tp]
\setlength{\abovecaptionskip}{-0.1cm}
\begin{center}
\scalebox{0.92}{
\includegraphics[width=1\textwidth]{./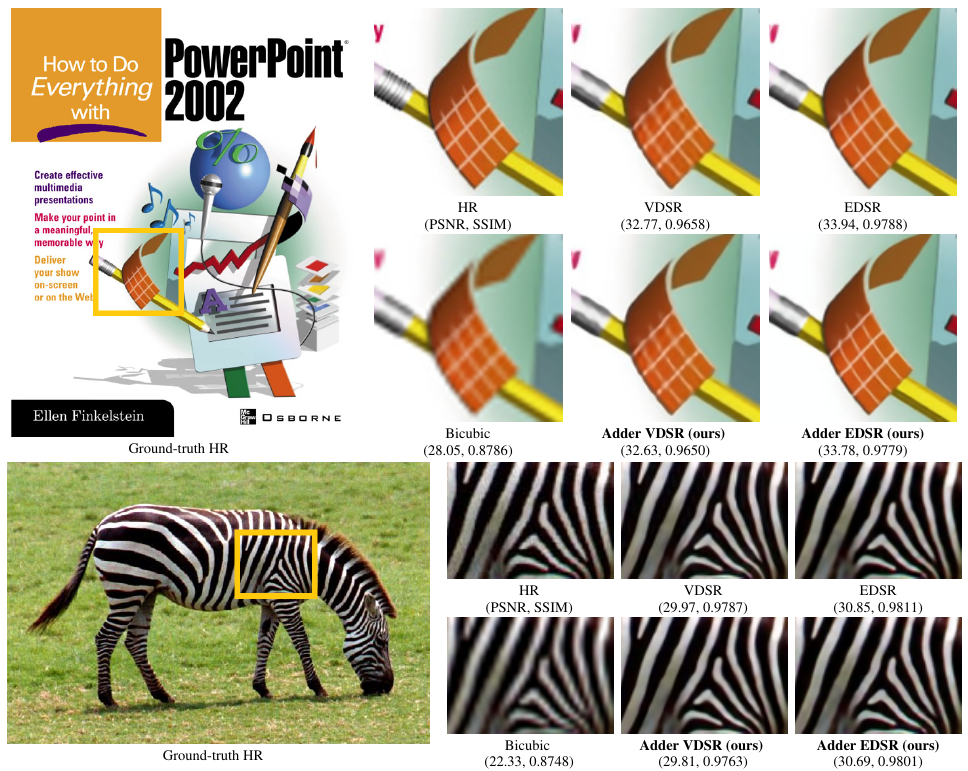}
}
\end{center}
\vspace{-0.35cm}
\caption{Visualization of super-resolution image of AdderSR Network and CNN on $\times 3$ scale.}
\label{SRImageCompare}
\vspace{-0.3cm}
\end{figure*}

\paragraph{Overall Comparison.}

After investigating the impact of different components of the proposed method, we then conduct the experiments using the benchmark VDSR and EDSR architectures with different scaling factors, \ie, $\times 2$, $\times 3$ and $\times 4$ to illustrate the superiority of the proposed AdderSR networks for the SISR problem. Table~\ref{compareSota} shows the average PSNR and SSIM results of different models using both convolutional layers and adder units on Set5, Set14, B100, Urban100, respectively. It can be found in Table~\ref{compareSota}, our models using only additions in intermediate layers can also achieve comparable PSNR and SSIM values to those of their baselines with massive multiplications. The average PSNR gap between AdderSR models and baselines using convolutions over the four datasets is about 0.17 \emph{dB}.

Besides the quantitative comparison, we also provide the comparison on the visual qualities of the same architecture using both additions and multiplications as shown in Figure~\ref{SRImageCompare}. Actually, since the PSNR values of AdderSR networks and their baselines as shown in Table~\ref{compareSota} are very closed, the output high-resolution images are of the similar visual quality. Particularly, the network using adder units can also effectively reconstruct the texture and color information. These results demonstrate that, besides the visual recognition task, we can also utilize additions to establish excellent neural networks for solving the SISR image processing problem. More visualization results of these models can be found in the supplementary materials.

\begin{table*}[t]
\small
\begin{center}
\caption{Quantitative results of baseline convolutional networks and our AdderSR models. Wherein,  $\times 2$, $\times 3$ and $\times 4$ are the output scaling factors for the SISR task. ANN and CNN denote the networks using adder units and traditional convolution layers, respectively. "G" represents $10^9$.}
\label{compareSota}
\renewcommand{\arraystretch}{1.2}
\setlength{\tabcolsep}{1.6mm}{
\begin{tabular}{@{}  p{0.85cm}<{\centering}|p{1.05cm}<{\centering} p{0.8cm}<{\centering}|p{1.1cm}<{\centering} p{1.1cm}<{\centering}|p{1.75cm}<{\centering}p{1.75cm}<{\centering} p{1.75cm}<{\centering} p{1.75cm}<{\centering}}
\hline
\multirow{2}*{Scale} & \multirow{2}*{Model}  & \multirow{2}*{Type} & \#Mul. & \#Add. &  Set5 & Set14 & B100 & Urban100 \\
~  & ~ & ~& (G) &  (G) &  PSNR/SSIM & PSNR/SSIM & PSNR/SSIM & PSNR/SSIM \\
\hline
\multirow{4}*{$\times 2$} & \multirow{2}*{VDSR} & ANN & 1.1 & 1224.1 & 37.37/0.9583 & 32.93/0.9115 & 31.81/0.8945 & 30.48/0.9104\\
~ & ~ & CNN & 612.6 & 612.6 & 37.53/0.9587 & 33.03/0.9124 & 31.90/0.8960 & 30.76/0.9140\\
\cline{2-9}
~ & \multirow{2}*{EDSR} & ANN & 7.9 & 18489.8 & 37.92/0.9589 & 33.82/0.9183 & 32.23/0.9000 & 32.63/0.9309\\
~ & ~ & CNN & 9248.9 & 9248.9 & 38.11/0.9601 & 33.92/0.9195 & 32.32/0.9013 & 32.93/0.9351\\
\hline
\multirow{4}*{$\times 3$} & \multirow{2}*{VDSR} & ANN & 1.1 & 1224.1 & 33.54/0.9204 & 29.75/0.8306 & 28.76/0.7958 & 26.95/0.8220\\
~ & ~ & CNN & 612.6 & 612.6 & 33.66/0.9213 & 29.77/0.8314 & 28.82/0.7976 & 27.14/0.8279\\
\cline{2-9}
~ & \multirow{2}*{EDSR} & ANN& 7.1 & 8825.2 & 34.35/0.9212 & 30.33/0.8420 & 29.13/0.8068 & 28.54/0.8555\\
~ & ~ & CNN & 4416.1 & 4416.1 & 34.65/0.9282 & 30.52/0.8462 & 29.25/0.8093 & 28.80/0.8653\\
\hline
\multirow{4}*{$\times 4$} & \multirow{2}*{VDSR} & ANN & 1.1 & 1224.1 & 31.19/0.8804 & 27.93/0.7646 & 27.22/0.7225 & 25.02/0.7447\\
~ & ~ & CNN & 612.6 & 612.6 & 31.35/0.8838 & 28.01/0.7674 & 27.29/0.7251 & 25.18/0.7524\\
\cline{2-9}
~ & \multirow{2}*{EDSR} & ANN & 6.8 & 5442.6 & 32.13/0.8864 & 28.57/0.7800 & 27.58/0.7368 & 26.33/0.7874\\
~ & ~ & CNN & 2724.7 & 2724.7 & 32.46/0.8968 & 28.80/0.7876 & 27.71/0.7420 & 26.64/0.8033\\
\hline
\end{tabular}}
\vspace{-0.4cm}
\end{center}
\end{table*}

\begin{table*}[t]
\small
\begin{center}
\caption{Quantitative results of CARN and pruned architectures using convolutional and adder layers, respectively. PSNR values are reported on Urban 100 database.}
\label{efficientTable}
\renewcommand{\arraystretch}{1.3}
\setlength{\tabcolsep}{1.0mm}{
\begin{tabular}{p{2.0cm}<{\centering}||p{1.6cm}<{\centering}|p{1.6cm}<{\centering}|p{1.6cm}<{\centering}| p{1.6cm}<{\centering} || p{1.6cm}<{\centering}| p{1.6cm}<{\centering}| p{1.6cm}<{\centering}| p{1.6cm}<{\centering}}
\hline
Model & CARN & CARN-$\frac{2}{3}$ & CARN-$\frac{1}{2}$ & CARN-$\frac{1}{4}$ & A-CARN & A-CARN-$\frac{2}{3}$ & A-CARN-$\frac{1}{2}$ & A-CARN-$\frac{1}{4}$ \\
\hline
\hline
PSNR(\emph{dB}) & 31.92 & 31.35 & 31.15 & 30.40 & 31.74 & 31.27 & 30.94 & 30.21\\
\hline
Energy(\emph{pJ}) & 1027G & 404G & 260G & 66G & 402G & 160G & 102G & 26G\\
\hline
\end{tabular}}
\end{center}
\vspace{-6.mm}
\end{table*}

\paragraph{Energy Consumptions.}

\begin{table}[ht]
\small
\begin{center}
\caption{The energy consumptions of different networks for $\times 2$ scale. The energy cost is computed using 720p (\ie $1280 \times 720$) high-resolution image.}
\label{energyTable}
\renewcommand{\arraystretch}{1.15}
\setlength{\tabcolsep}{1.40mm}{
\begin{tabular}{@{}  b{2.1cm}<{\centering}|b{1.2cm}<{\centering}| b{1.2cm}<{\centering}|b{1.2cm}<{\centering}| b{1.2cm}<{\centering}}
\hline
\multirow{2}*{Model} & \multicolumn{2}{c|}{VDSR}  & \multicolumn{2}{c}{EDSR} \\
\cline{2-5}
~  & CNN & ANN & CNN &  ANN  \\
\hline
Energy cost (\emph{pJ}) & 2817.9G & 1105.6G & 42544.7G & 16670.3G \\
\hline
\end{tabular}}
\end{center}
\vspace{-1.0cm}
\end{table}

As discussed in previous works~\cite{chen2019addernet,horowitz20141,sze2017efficient,you2020shiftaddnet},  additions require much lower energy consumptions compared with multiplications. We further calculate the energy consumptions of different networks used in Table~\ref{compareSota}. In practice, most of values in recent SISR models are 32-bit floating numbers, and the energy consumptions for a 32-bit addition and multiplication are $0.9$ \emph{pJ} and $3.7$ \emph{pJ}, respectively. The amount of remaining multiplication operations in AdderSR network is extremely small compared with the FLOPs of the entire network. The detailed energy costs of the these networks are reported in Table~\ref{energyTable}, which is computed according to literature~\cite{horowitz20141,sze2017efficient}. Obviously, the proposed AdderSR method can reduce the energy cost for reconstructing a $1280 \times 720$ image by a fact of about 2.5$\times$. If we further quantize the weights and activations of these models shown in Table~\ref{compareSota} to int8 values, we can obtain an about 3.8$\times$ reduction on the energy consumption using the proposed AdderSR with comparable performance. These results can be found in supplementary materials. We also conduct experiments using our methods on recent lightweight SR models (\ie, CARN~\cite{ahn2018fast}) and report the energy consumption v.s. performance comparison in Table~\ref{efficientTable}, where CARN-$\frac{2}{3}$ , CARN-$\frac{1}{2}$ and CARN-$\frac{1}{4}$ are the models with $\frac{2}{3}$, $\frac{1}{2}$ and $\frac{1}{4}$ channels after applying filter pruning, respectively. A-CARN denotes the model using AdderNets. We can achieve the comparable PSNR values on compact SISR architectures. The performance of A-CARN is about 0.4dB higher than that of CARN-$\frac{2}{3}$ with similar energy consumption (404G\emph{pJ}) after pruning. This significant reduction on energy consumption will make these deep learning models portable on mobile devices.
\section{Conclusions and Discussion}
\label{conclusion}

This paper investigates the single image super-resolution problem using AdderNets. Without changing the original neural architectures, we develop a new adder unit and a novel learnable power activation for addressing the defects in existing adder neural networks. The new AdderSR models can learn the functionalities of conventional identity mapping and high-pass filter, which are essential for providing images with high visual quality. Experimental results conducted on several benchmarks illustrate that, the proposed AdderSR networks can achieve the similar visual quality to that of their baselines using traditional convolution filters. Meanwhile, since additions are much cheaper than multiplications, we can reduce the energy consumptions of these models by about 2.5$\times$. Besides the image super-resolution task, the techniques in this paper can be well transferred to other image processing problems including denoising, deblurring, \etc Future works will focus on more applications and low-bit quantization versions of these networks to achieve higher reduction on the energy consumption.

{\small
\bibliographystyle{ieee_fullname}
\bibliography{ref}
}

\end{document}